\documentclass[11pt, a4paper]{article}

\usepackage[utf8]{inputenc}    
\usepackage[T1]{fontenc}       
\usepackage[english]{babel}    
\usepackage{amsmath, amssymb, amsthm, mathtools} 
\usepackage{bm}                
\usepackage{textcomp}          
\usepackage{enumitem}          
\usepackage{geometry}          
\usepackage{graphicx}          
\usepackage{tikz-cd}		   
\usepackage{xcolor}            
\DeclareGraphicsExtensions{.pdf,.eps,.jpg,.png}
\graphicspath{{figures/}{figure/}{pictures/}{picture/}{pic/}{pics/}{image/}{images/}}

\usepackage[colorlinks=true, allcolors=blue]{hyperref} 
\usepackage[capitalize]{cleveref}
\usepackage[numbers, sort&compress]{natbib}

\makeatletter

\newcommand{\Rmnum}[1]{\expandafter\@slowromancap\romannumeral #1@}
\makeatother

\newcommand{\E}{\mathbb{E}}

\newcommand{\Z}{\mathbb{Z}}
\newcommand{\Pp}{\mathbb{P}}
\newcommand{\fr}{\mathrm{fr}}
\newcommand{\per}{\mathrm{per}}
\newcommand{\Lam}{\Lambda}
\newcommand{\cP}{\mathcal{P}}
\newcommand{\cf}{\mathcal{f}}
\newcommand{\eone}{\mathbf{e}}

\theoremstyle{plain} 
\newtheorem{theorem}{Theorem}
\newtheorem{lemma}[theorem]{Lemma}    
\newtheorem{proposition}[theorem]{Proposition}
\newtheorem{corollary}[theorem]{Corollary}

\theoremstyle{definition}

\theoremstyle{remark} 
\newtheorem*{remark}{Remark}

\title{\Large\bf The Thermodynamic Limit of Short-Range Spin Glasses with Periodic Boundary Conditions}
\author{Hexiang Wang \and Keheng Zhu \and Mauris Chueng}
\newcommand{\Addresses}{{
		\bigskip
		\footnotesize
		
		\textsc{Hexiang Wang}, \textsc{School of Mathematical Sciences, Nankai University, Tianjin, 300071, China}\par\nopagebreak
		\texttt{Kui6539@outlook.com}
		\medskip
		
		\textsc{Keheng Zhu}, \textsc{Academy for Multidisciplinary Studies, School of Mathematics Sciences, Capital Normal
			University, Beijing, 100048, China}\par\nopagebreak
		\texttt{hexistartop@gmail.com}
		\medskip
		
		\textsc{Mauris Chueng}, \textsc{School of Statistics and Data Science, Jilin University of Finance and Economics, Changchun, 130117, China}\par\nopagebreak
		\texttt{maurischueng@gmail.com}
		\medskip
}}
\date{}

\begin{document}
\maketitle
\begin{abstract}
	For the nearest-neighbour Edwards--Anderson Ising model on the discrete torus, we prove existence of the quenched thermodynamic limit and almost-sure self-averaging of the free-energy density.  The only moment assumption in the main argument is $\E|J|<\infty$, and no symmetry or centering of the coupling law is needed.  The proof avoids periodic subadditivity: the wrap-around bonds form a surface-order perturbation, while a tiling argument and the strong law of large numbers give the free-boundary limit.  We also obtain the quantitative bounds $(O(L^{-1})$ for the disorder-averaged finite-volume correction.  A precise common probability space is specified, since an almost-sure statement across volumes is otherwise not well defined.
\end{abstract}

\tableofcontents
\section{Introduction}

\subsection{A motivating story}
Imagine a simple social dynamics scenario to illustrate why we care about the fundamental behavior of disordered systems under non-trivial topological boundaries. Suppose a group of individuals is stranded on an isolated, highly unusual island whose geography is not a flat plain, but rather a seamless, closed structure resembling a torus---where walking straight in any direction eventually brings a traveler back to their starting point. The inhabitants must organize themselves into two distinct, competing political factions. Some pairs of individuals are naturally close friends who prefer to belong to the same faction, while others are bitter rivals who strongly desire to be separated into opposing camps. 

In a traditional, harmonious society---analogous to a regular ferromagnetic system---everyone might easily agree to either all become friends or follow a perfectly predictable rule of alternating alliances. However, the situation on our toroidal island is chaotic: the patterns of friendship and enmity are distributed completely at random. This randomness introduces the defining phenomenon of complex systems: \textit{geometric frustration}. Consider three individuals, $A$, $B$, and $C$, situated sequentially along a local path. If $A$ and $B$ are friends, $B$ and $C$ are friends, but $A$ and $C$ are bitter enemies, it is mathematically impossible to satisfy all three relationships simultaneously. This local inability to find a globally optimal state creates a rugged energy landscape with a mind-boggling number of deep, stable valleys separated by high barriers. 

What makes the toroidal geometry of this island particularly fascinating---and devilishly difficult---is the presence of global feedback loops. In a flat colony with a free boundary, an individual living at the edge can simply ignore the outside world. But on this seamless ring-world, any social alignment propagated forward along the lattice will eventually wrap around the entire circumference of the island and confront itself from behind, acting like a geometric boomerang. Local factions cannot simply expand indefinitely without structurally colliding with their own topological tails.

To a mathematician or a physicist, the overarching question is whether such a closed, interconnected society, as its population grows to infinity, can ever settle down into a statistically predictable macrostate, or if the global wrapping of random bonds will trap it in eternal, sample-dependent chaos. This mystery drives the study of the thermodynamic limit and self-averaging in spin glasses under periodic boundary conditions.

\subsection{Mathematical background}
The short-range spin-glass model introduced by Edwards and Anderson \cite{EA1975} is a random-bond Ising model on a finite-dimensional lattice. General thermodynamic-limit results for finite-dimensional disordered systems are classical; see, for example, \cite{CGP2004,CS2009,Zegarlinski1991}. Boundary-condition and surface-pressure questions for the Edwards--Anderson model were studied in \cite{CG2004}, while the equality of periodic and ordinary thermodynamic limits belongs to a broader boundary-stability principle going back at least to \cite{FisherLebowitz1970}. The purpose of this note is not to claim a new thermodynamic-limit theorem, but to give a direct, self-contained proof tailored to the precise periodic-boundary and self-averaging questions.

The apparent obstruction to establishing the thermodynamic limit on a torus is that a periodic box does not split into smaller periodic boxes, which breaks the standard subadditivity arguments. The correct observation to circumvent this difficulty is instead that passing from a free box to the corresponding torus adds exactly $dL^{d-1}$ wrap-around bonds. Since the logarithm of the partition function changes by at most $\beta |J_e|$ when one bond of strength $J_e$ is added, periodicity changes the pressure only by a surface term. Thus direct periodic subadditivity is unnecessary.

To formalize this framework, we fix $d\geq 1$ and write
\[
\Lam_L=\{0,1,\dots,L-1\}^d\subset\Z^d,
\qquad |\Lam_L|=L^d,
\]
where $\eone_i$ denotes the $i$-th coordinate vector. We work on the canonical disorder space carrying independent identically distributed random variables
\[
\{J_{x,i}:x\in\Z^d,\ 1\leq i\leq d\}
\]
with common law $\nu$. The variable $J_{x,i}$ labels the positively oriented bond from $x$ to $x+\eone_i$. Assume throughout that the first absolute moment is finite:
\begin{equation}\label{eq:firstmoment}
	m_1:=\E |J_{0,1}|<\infty.
\end{equation}

For every $L\geq1$, define the free Hamiltonian on $\{\pm1\}^{\Lam_L}$ by the first formula below; for $L\geq3$, define the periodic Hamiltonian by the second:
\begin{align}
	H_L^{\fr}(\sigma;J)
	&=-\sum_{i=1}^d\ \sum_{\substack{x\in\Lam_L\\x_i\leq L-2}}
	J_{x,i}\sigma_x\sigma_{x+\eone_i},\label{eq:Hfree}\\
	H_L^{\per}(\sigma;J)
	&=-\sum_{i=1}^d\ \sum_{x\in\Lam_L}
	J_{x,i}\sigma_x\sigma_{x+\eone_i\bmod L}.\label{eq:Hper}
\end{align}
The restriction $L\geq3$ for the periodic model only avoids the harmless multiple-edge convention for very small tori. For $b\in\{\fr,\per\}$ and $\beta\geq0$, set
\[
Z_L^b(\beta,J)=\sum_{\sigma\in\{\pm1\}^{\Lam_L}}e^{-\beta H_L^b(\sigma;J)},
\qquad
\cP_L^b(\beta,J)=\frac1{L^d}\log Z_L^b(\beta,J),
\]
and let $p_L^b(\beta)=\E\cP_L^b(\beta,J)$. For $\beta>0$, the random and quenched free-energy densities are
\[
\cf_L^b(\beta,J)=-\frac1\beta\cP_L^b(\beta,J),
\qquad
f_L^b(\beta)=\E\cf_L^b(\beta,J)=-\frac1\beta p_L^b(\beta).
\]

\begin{theorem}\label{thm:main}
	Assume \eqref{eq:firstmoment}. For every $\beta\geq0$, there is a finite deterministic number $p_\infty(\beta)$ such that
	\[
	\lim_{L\to\infty}p_L^{\fr}(\beta)
	=\lim_{L\to\infty}p_L^{\per}(\beta)
	=p_\infty(\beta).
	\]
	More precisely, for every $L\geq3$,
	\begin{equation}\label{eq:quantpressure}
		\bigl|p_L^{\fr}(\beta)-p_\infty(\beta)\bigr|
		\leq \frac{\beta d m_1}{L},
		\qquad
		\bigl|p_L^{\per}(\beta)-p_\infty(\beta)\bigr|
		\leq \frac{2\beta d m_1}{L}.
	\end{equation}
	On the canonical disorder space above,
	\begin{equation}\label{eq:aspressure}
		\cP_L^{\fr}(\beta,J)\longrightarrow p_\infty(\beta),
		\qquad
		\cP_L^{\per}(\beta,J)\longrightarrow p_\infty(\beta)
		\quad\text{almost surely}.
	\end{equation}
	Consequently, for every $\beta>0$, with $f_\infty(\beta)=-p_\infty(\beta)/\beta$,
	\begin{equation}\label{eq:freeconclusions}
		\lim_{L\to\infty}f_L^{\per}(\beta)=f_\infty(\beta),
		\qquad
		\Pp\!\left(\lim_{L\to\infty}\cf_L^{\per}(\beta,J)=f_\infty(\beta)\right)=1,
	\end{equation}
	and
	\[
	|f_L^{\fr}(\beta)-f_\infty(\beta)|\leq \frac{dm_1}{L},
	\qquad
	|f_L^{\per}(\beta)-f_\infty(\beta)|\leq \frac{2dm_1}{L}.
	\]
\end{theorem}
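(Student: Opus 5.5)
The proof rests on one elementary bond-perturbation inequality. For any finite graph and any single bond $e$ with coupling $J_e$, adding or removing that bond changes $\log Z$ by at most $\beta|J_e|$, since $e^{\beta J_e\sigma_x\sigma_y}\in[e^{-\beta|J_e|},e^{\beta|J_e|}]$ pointwise in $\sigma$. The periodic Hamiltonian \eqref{eq:Hper} differs from the free one \eqref{eq:Hfree} by exactly the $dL^{d-1}$ wrap-around bonds $\{(x,i):x_i=L-1\}$. This gives the deterministic bound
\[
\bigl|\cP_L^{\per}(\beta,J)-\cP_L^{\fr}(\beta,J)\bigr|\le \frac{\beta}{L^d}\sum_{i=1}^d\sum_{x\in\Lam_L,\,x_i=L-1}|J_{x,i}|,
\]
whose expectation is at most $\beta d m_1/L$. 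Integrability of $\cP_L^b$ follows from $|\cP_L^b-\log 2|\le \beta d m_1$-type bounds, obtained by removing all bonds. It therefore suffices to treat the free model; the periodic statements then follow by the triangle inequality. For the almost-sure periodic statement, the boundary sum needs the separate argument given at the end.

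\textbf{Free model, expectation.} I will tile $\Lam_{kL}$ by $k^d$ translates of $\Lam_L$. Deleting the inter-block bonds, of which there are $d\,(k-1)k^{d-1}L^{d-1}$, decouples the partition function into a product of independent copies. Applying the perturbation inequality to the deleted bonds and taking expectations gives
\[
|p_{kL}^{\fr}-p_L^{\fr}|\le \beta d m_1\,\frac{(k-1)k^{d-1}L^{d-1}}{k^dL^d}\le \frac{\beta d m_1}{L}.
\]
Applying this along $L\mid M$ with $L,M$ both multiples of a common base shows that $(p_{L}^{\fr})$ is Cauchy along $\{L\cdot k\}$. More cleanly, I compare $p_L^{\fr}$ and $p_M^{\fr}$ through $p_{LM}^{\fr}$, obtaining $|p_L^{\fr}-p_M^{\fr}|\le \beta d m_1(1/L+1/M)$. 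The whole sequence is therefore Cauchy; call its limit $p_\infty$. Letting $k\to\infty$ in the displayed bound yields $|p_L^{\fr}-p_\infty|\le\beta dm_1/L$, and the periodic bound $2\beta dm_1/L$ follows from the comparison above. Finiteness follows from the a priori bound, and the free-energy statements follow by dividing by $-\beta$.

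\textbf{Almost-sure convergence (main obstacle).} Fix $L$. For each $n$, write $\Lam_n$ as the union of $\lfloor n/L\rfloor^d$ disjoint translated $L$-blocks plus a leftover region of size $O(Ln^{d-1})$. Removing the bonds between blocks and the bonds touching the leftover region, and bounding the leftover spins trivially (each contributes between $0$ and $\log 2$ once its bonds are removed), I expect to get
\[
\Bigl|\cP_n^{\fr}-\frac{1}{n^d}\sum_{\text{blocks }B}\log Z_B^{\fr}\Bigr|\le \frac{\log 2\cdot|\text{leftover}|}{n^d}+\frac{\beta}{n^d}\sum_{e\in\mathcal{E}_{n,L}}|J_e|,
\]
where $\mathcal{E}_{n,L}$ is the set of cut bonds. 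The block terms are i.i.d. in $B$, because the blocks lie at fixed positions $L\mathbb{Z}^d$ in the canonical space. The multiparameter SLLN over growing cubes (which holds for i.i.d. $L^1$ fields indexed by cubes, e.g. by Smythe's theorem or by reducing to nested unions) then gives convergence to $p_L^{\fr}$ almost surely. The cut-bond sum is the delicate part, since it is not an independent-copy average. I will bound it by the sum of $|J_e|$ over all bonds with some coordinate congruent to $L-1$ mod $L$ (plus the leftover bonds). This is a union of $d$ sublattice families, each with density $1/L$, so by the SLLN its normalized sum converges to at most $\beta d m_1/L$ plus $o(1)$. Hence $\limsup|\cP_n^{\fr}-p_\infty|\le 2\beta dm_1/L$ almost surely, and letting $L\to\infty$ along integers proves \eqref{eq:aspressure} for the free model.

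For the periodic model, I need $L^{-d}\sum_{x_i=L-1}|J_{x,i}|\to0$ almost surely. This sum involves a single hyperplane of size $L^{d-1}$, and it varies with $L$. For $d\ge2$, I will use $\frac{1}{L^{d}}\sum\le \frac1L\cdot\frac{1}{L^{d-1}}\sum_{x\in\Lam_L}|J_{x,i}|\cdot L/L$. More carefully, I bound the hyperplane sum by the full-cube sum $\sum_{x\in\Lam_{L}}|J_{x,i}|$ divided by $L^d$ with weight $1$, which converges to $m_1$ almost surely but does not tend to $0$. So instead I will use that the maximum over $L'\le L$ of the hyperplane sums is $o(L^d)$, via $\sum_{x\in\Lam_{L+1}}-\sum_{x\in\Lam_L}=o(L^d)$ almost surely by the SLLN for the cube sums; the hyperplane slab is contained in this difference. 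For $d=1$, the term is $|J_{L-1,1}|/L$, which tends to $0$ almost surely by Borel--Cantelli using $\E|J|<\infty$.
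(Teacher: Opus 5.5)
Your proof is correct, and most of it follows the paper. The shared steps are the seam estimate \eqref{eq:seam}, i.i.d.\ block pressures averaged over nested cubes of blocks via the ordinary strong law, interior cut bonds controlled through the residue classes $x_i\equiv L-1 \pmod L$, and seam/leftover bonds controlled by differences of cube sums. Your free almost-sure step is the paper's Proposition~\ref{prop:asfree} with the roles of $k$ and $L$ exchanged.

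The genuinely different step is the deterministic limit. The paper tiles an arbitrary $\Lambda_L$ by $k$-blocks plus a remainder $R_{L,k}$, counts the deleted bonds in Lemma~\ref{lem:count}, and closes with a $\limsup/\liminf$ squeeze as $L\to\infty$ for fixed $k$. You instead tile $\Lambda_{LM}$ exactly, once by copies of $\Lambda_L$ and once by copies of $\Lambda_M$. This gives $|p_L^{\fr}-p_M^{\fr}|\le\beta dm_1(1/L+1/M)$, which yields the Cauchy property and the rate $\beta dm_1/L$ at once, with no remainder bookkeeping. The price is that you still need a tiling with remainder for the almost-sure limit along all $n$. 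The paper builds one decomposition and uses it for both the mean and the almost-sure statements.

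A few small repairs are needed.
\begin{itemize}
\item The seam $\{x\in\Lambda_L: x_i=L-1\}$ lies in $\Lambda_L\setminus\Lambda_{L-1}$, not in $\Lambda_{L+1}\setminus\Lambda_L$ as you wrote. With $A_L=\sum_{x\in\Lambda_L}|J_{x,i}|$, the limit $A_L/L^d\to m_1$ gives $(A_L-A_{L-1})/L^d\to0$ for every $d\ge1$. So your separate Borel--Cantelli treatment of $d=1$ is correct but unnecessary.
\item The same argument with $A_n-A_{n-L}$ is what justifies the $o(1)$ you assert, without proof, for bonds touching the leftover region in the free almost-sure step. Their origins all lie in $\Lambda_n\setminus\Lambda_{n-L}$.
\item For the cube averages of block pressures, rely on the nested enumeration rather than Smythe's theorem. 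The multiparameter strong law over rectangles needs $\E|X|(\log^+|X|)^{d-1}<\infty$, which would exceed the first-moment hypothesis the theorem is built around.
\end{itemize}
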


\begin{remark}[Why the common disorder space matters]\label{rem:jointlaw}
	For each fixed $L$, the family of torus couplings used in \eqref{eq:Hper} is i.i.d. with law $\nu$, so the canonical construction has exactly the usual finite-volume distribution. An assertion involving ``almost surely as $L\to\infty$'', however, requires a joint law for all volumes. If a fresh, unrelated torus is sampled for every $L$, the one-volume marginals alone determine convergence in probability but not, without an additional summable concentration estimate, almost-sure convergence. The canonical infinite-lattice realization is the standard quenched interpretation: one disorder sample is fixed and increasingly large systems are observed in it.
\end{remark}

\subsection{Main result and contributions}
The core technical philosophy underlying our approach lies in the systematic, probabilistic elimination of boundary interference terms. In traditional finite-dimensional spin-glass analyses, surface terms and boundary fluctuations are typically retained or delicately balanced using heavy concentration machinery, which in turn demands restrictive assumption on the disorder law (such as symmetry or exponential moments). In contrast, our method leverages a direct probabilistic decoupling: by viewing the periodic wrap-around seam merely as a surface perturbation over a free-boundary domain, we show that once the lattice is structurally \textit{cut}, the random fluctuations of these boundary couplings are completely neutralized in the thermodynamic limit. Because their total contribution to the global free energy density vanishes as $L \to \infty$, we can safely filter out these geometric interference terms without any loss of rigor. This allows us to establish the thermodynamic limit under the mildest possible probabilistic conditions.

Guided by this conceptual framework, the main contribution of this note is a direct, elementary proof of Theorem~\ref{thm:main} that bypasses complex interpolations or subadditivity requirements for periodic boundary conditions. Compared to standard techniques in the literature, our approach offers three major methodology advantages:
\begin{itemize}
	\item \textbf{Optimal Quantitative Rates:} We provide explicit finite-volume correction bounds showing that the convergence to the thermodynamic limit scales as $O(L^{-1})$ for both free and periodic boundary conditions.
	\item \textbf{Minimal Moment Conditions:} The entire proof operates under the mildest possible assumption $\E|J| < \infty$. It requires no centering ($\E J = 0$), symmetry of the law $\nu$, or exponential moments, which are typically mandatory for standard concentration inequalities.
	\item \textbf{Rigorous Self-Averaging:} By establishing the problem on a single canonical infinite-volume disorder space, we give a concrete meaning to the almost-sure convergence of the free energy as the volume expands, matching the true experimental scenario of a fixed quenched sample.
\end{itemize}

From a technical perspective, the validity of these results rests upon four independent modular pillars, which form the logical core of the remaining sections:
\begin{enumerate}[label=\textup{(\arabic*)}]
	\item \cref{lem:bond} (Bond Perturbation) establishes an analytic control that converts the deletion or insertion of arbitrary lattice bonds into a direct sum of their absolute strengths.
	\item A two-scale tiling argument partitions a large free hypercube into fixed $k$-cubes, showing that the omitted boundary interfaces delete at most $d/k$ bonds per site asymptotically. This justifies the deterministic free-boundary limit and provides the $O(k^{-1})$ rate.
	\item The standard strong law of large numbers (SLLN), applied first to independent block pressures and subsequently to periodic residue classes of bond origins, upgrades the deterministic limit to full almost-sure convergence under the single first-moment hypothesis.
	\item The periodic wrap-around seam is shown to be strictly confined within a boundary shell of thickness one. Since the absolute sum of these seam couplings is asymptotically $o(L^d)$ almost surely, the periodic and free partition pressures are forced to share an identical macroscopic limit.
\end{enumerate}
Notably, no step in this derivation invokes periodic subadditivity, positivity of the couplings, or heavy concentration machinery, making the proof remarkably robust and self-contained.

\section{Finite-volume comparison}

The following elementary estimate is the only analytic input needed to compare boundary conditions.

\begin{lemma}[Bond perturbation]\label{lem:bond}
Let $\Omega$ and $I$ be finite, let $\{S_e\}_{e\in I}$ be functions from $\Omega$ to $[-1,1]$, and define
\[
 Z(K)=\sum_{\sigma\in\Omega}
       \exp\!\left(\beta\sum_{e\in I}K_eS_e(\sigma)\right).
\]
For two coefficient families $K,K'$,
\begin{equation}\label{eq:bondpert}
 |\log Z(K)-\log Z(K')|
 \leq \beta\sum_{e\in I}|K_e-K'_e|.
\end{equation}
\end{lemma}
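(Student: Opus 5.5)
The bound will follow from a pointwise comparison of the Boltzmann weights, after which we sum over $\sigma$ and take logarithms. No interpolation or derivative computation is needed, although the derivative viewpoint gives the same result.

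Fix $\sigma\in\Omega$. Since $|S_e(\sigma)|\leq 1$ for every $e\in I$, the exponents satisfy
\[
\beta\sum_{e\in I}K_eS_e(\sigma)-\beta\sum_{e\in I}K'_eS_e(\sigma)
=\beta\sum_{e\in I}(K_e-K'_e)S_e(\sigma)
\leq \beta\sum_{e\in I}|K_e-K'_e|=:\Delta ,
\]
where we use $\beta\geq 0$, the standing convention of the paper. Consequently $\exp(\beta\sum_e K_eS_e(\sigma))\leq e^{\Delta}\exp(\beta\sum_e K'_eS_e(\sigma))$ for each $\sigma$. Summing over the finite set $\Omega$ gives $Z(K)\leq e^{\Delta}Z(K')$. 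Both partition functions are finite sums of strictly positive terms, so they are positive and finite, provided $\Omega\neq\emptyset$, which is implicit. Taking logarithms yields $\log Z(K)-\log Z(K')\leq\Delta$.

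Exchanging the roles of $K$ and $K'$ gives the reverse inequality, because $\Delta$ is symmetric in $K,K'$. Together these give \eqref{eq:bondpert}.

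There is no real obstacle here; the only points needing care are the sign of $\beta$ and nonemptiness of $\Omega$. Alternatively, one could differentiate $t\mapsto\log Z(K'+t(K-K'))$. Its derivative is the Gibbs average of $\beta\sum_e(K_e-K'_e)S_e$, which has absolute value at most $\Delta$. The pointwise argument above is simpler, so it is the one I would write.

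In later use, $\Omega=\{\pm1\}^{\Lam_L}$ and $S_e(\sigma)=\sigma_x\sigma_y$, and deleting a bond corresponds to setting $K'_e=0$.
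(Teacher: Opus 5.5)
Your proof is correct and is essentially the paper's argument: both bound the exponent difference pointwise by $\beta\sum_{e}|K_e-K'_e|$, deduce $e^{-\Delta}Z(K')\leq Z(K)\leq e^{\Delta}Z(K')$, and take logarithms. Your explicit checks that $\beta\geq 0$ and $\Omega\neq\varnothing$ are reasonable additions that the paper leaves implicit.
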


\begin{proof}
For every configuration $\sigma$,
\begin{equation}
 \left|\beta\sum_{e\in I}(K_e-K'_e)S_e(\sigma)\right|
 \leq a:=\beta\sum_{e\in I}|K_e-K'_e|.
\end{equation}
Hence $e^{-a}Z(K')\leq Z(K)\leq e^aZ(K')$, and taking logarithms gives \eqref{eq:bondpert}.
\end{proof}

In particular, for an Ising system on a finite vertex set $V$ with finite bond set $E$, comparison with the zero-coupling system gives
\begin{equation}\label{eq:basicbound}
 |V|\log2-\beta\sum_{e\in E}|J_e|
 \leq \log Z_V(\beta,J)
 \leq |V|\log2+\beta\sum_{e\in E}|J_e|.
\end{equation}
Thus all pressures below are integrable under \eqref{eq:firstmoment}.

\section{The free-boundary limit}

We first prove the deterministic quenched limit by a two-scale tiling argument.  This also gives the rate in \eqref{eq:quantpressure}.

Fix integers $L\geq k\geq1$, write $q=\lfloor L/k\rfloor$, and put
\[
 C_{L,k}=\{0,\dots,qk-1\}^d,
 \qquad R_{L,k}=\Lam_L\setminus C_{L,k}.
\]
Partition $C_{L,k}$ into the $q^d$ disjoint translates
\[
 B_z=kz+\Lam_k,
 \qquad z\in\{0,\dots,q-1\}^d.
\]
Delete every free bond of $\Lam_L$ which is not internal to one of the blocks $B_z$, and denote the deleted set by $D_{L,k}$.  The spins in $R_{L,k}$ then become isolated, and the resulting partition function is
\begin{equation}\label{eq:decoupledZ}
 Z_{L,k}^{\mathrm{dec}}(\beta,J)
 =2^{|R_{L,k}|}\prod_{z\in\{0,\dots,q-1\}^d} Z_{B_z}^{\fr}(\beta,J).
\end{equation}

\begin{lemma}[Counting deleted bonds]\label{lem:count}
The deleted set satisfies
\begin{equation}\label{eq:Dcount}
 \frac{|D_{L,k}|}{L^d}
 \leq \frac{d}{k}+\frac{2d^2k}{L}.
\end{equation}
\end{lemma}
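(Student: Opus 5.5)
We count deleted bonds direction by direction. Fix $i\in\{1,\dots,d\}$. A free bond $(x,x+\eone_i)$ of $\Lam_L$, with $x\in\Lam_L$ and $x_i\leq L-2$, fails to be internal to some block $B_z$ in exactly two cases. In the first case, at least one endpoint lies in $R_{L,k}$. In the second case, both endpoints lie in $C_{L,k}$ but in different blocks; since only the $i$-th coordinate changes, this happens precisely when $x_i+1\equiv 0 \pmod k$.

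\emph{Crossing bonds.} The second type is bounded by counting $x$ with $x\in C_{L,k}$, $x_i\in\{k-1,2k-1,\dots,(q-1)k-1\}$ and the other coordinates in $\{0,\dots,qk-1\}$. This gives $(q-1)(qk)^{d-1}\leq q^d k^{d-1}\leq L^d/k$, using $qk\leq L$.

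\emph{Bonds touching the residual region.} Every bond of the first type has an endpoint in $R_{L,k}$, and each vertex is the endpoint of at most $2$ bonds in direction $i$. So the number of such bonds is at most $2|R_{L,k}|$. Write $r=L-qk$, so that $0\leq r\leq k-1<k$. Then
\[
|R_{L,k}|=L^d-(qk)^d\leq d\,r\,L^{d-1}\leq d k L^{d-1},
\]
by the elementary bound $L^d-(L-r)^d\leq d r L^{d-1}$ (mean value theorem or telescoping).

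\emph{Summing over directions.} This gives
\[
|D_{L,k}|\leq \sum_{i=1}^d\Bigl(\frac{L^d}{k}+2dkL^{d-1}\Bigr)=\frac{dL^d}{k}+2d^2kL^{d-1},
\]
and dividing by $L^d$ yields \eqref{eq:Dcount}.

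The only delicate point is avoiding double counting between the two types while keeping the constant $2d^2$. This works because the crossing count is restricted to bonds with both endpoints in $C_{L,k}$, while every remaining deleted bond touches $R_{L,k}$. The factor $2$ per vertex per direction then absorbs everything. One could sharpen the bound by using one endpoint per bond, but that is not needed.
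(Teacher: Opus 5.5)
Your proposal is correct and follows essentially the same argument as the paper. Like the paper, you bound the block-interface bonds inside $C_{L,k}$ by $(q-1)(qk)^{d-1}\le L^d/k$ per direction, and the bonds touching $R_{L,k}$ by $2d|R_{L,k}|\le 2d^2kL^{d-1}$ in total.

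Your closing concern about double counting is unnecessary: for an upper bound, an overlapping cover of $D_{L,k}$ would be just as valid.
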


\begin{proof}
Inside $C_{L,k}$, the interfaces between adjacent $k$-blocks contribute
\[
 d(q-1)(qk)^{d-1}
\]

after summing over the $d$ coordinate directions.  Every other deleted bond has at least one endpoint in $R_{L,k}$, and there are at most $2d|R_{L,k}|$ such bonds.  Since
\[
 |R_{L,k}|=L^d-(qk)^d\leq d(L-qk)L^{d-1}\leq dkL^{d-1},
\]
we obtain
\[
 \frac{|D_{L,k}|}{L^d}
 \leq \frac{d(q-1)(qk)^{d-1}}{L^d}+\frac{2d^2k}{L}
 \leq \frac{d}{k}+\frac{2d^2k}{L}.
\]
\end{proof}

\begin{proposition}[Quenched free-boundary limit]\label{prop:meanfree}
For every $\beta\geq0$, the limit
\[
 p_\infty(\beta)=\lim_{L\to\infty}p_L^{\fr}(\beta)
\]
exists and is finite.  Moreover,
\begin{equation}\label{eq:free-rate}
 |p_k^{\fr}(\beta)-p_\infty(\beta)|\leq \frac{\beta dm_1}{k}
 \qquad(k\geq1).
\end{equation}
\end{proposition}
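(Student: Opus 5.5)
We compare the free box $\Lam_L$ with the decoupled system of \eqref{eq:decoupledZ} using \cref{lem:bond}, take expectations, and then let $L\to\infty$ with $k$ fixed. This gives two-sided bounds on $\limsup_L p_L^{\fr}$ and $\liminf_L p_L^{\fr}$ in terms of $p_k^{\fr}$. The Cauchy property and the rate \eqref{eq:free-rate} both follow from these bounds.

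\emph{Step 1 (finite-volume comparison).} The free Hamiltonian on $\Lam_L$ and the decoupled Hamiltonian differ only in the couplings of the deleted set $D_{L,k}$. Applying \cref{lem:bond} with $\Omega=\{\pm1\}^{\Lam_L}$ and $S_e=\sigma_x\sigma_y$ gives
\[
\Bigl|\log Z_L^{\fr}(\beta,J)-|R_{L,k}|\log 2-\sum_{z}\log Z_{B_z}^{\fr}(\beta,J)\Bigr|\le \beta\sum_{e\in D_{L,k}}|J_e| .
\]
By translation invariance of the i.i.d.\ couplings, each $\E\log Z_{B_z}^{\fr}$ equals $k^d p_k^{\fr}(\beta)$. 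All terms are integrable by \eqref{eq:basicbound}. Taking expectations and dividing by $L^d$, then using \cref{lem:count}, we obtain
\[
\Bigl|p_L^{\fr}(\beta)-\Bigl(\tfrac{qk}{L}\Bigr)^{d}p_k^{\fr}(\beta)-\tfrac{|R_{L,k}|}{L^d}\log2\Bigr|
\le \beta m_1\Bigl(\tfrac{d}{k}+\tfrac{2d^2k}{L}\Bigr).
\]

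\emph{Step 2 (limit in $L$ at fixed $k$).} Since $q=\lfloor L/k\rfloor$, we have $qk/L\to1$ and $|R_{L,k}|/L^d\le dk/L\to0$. Moreover $|p_k^{\fr}|\le \log2+\beta d m_1$ by \eqref{eq:basicbound}, which is finite. Hence
\[
p_k^{\fr}-\tfrac{\beta d m_1}{k}\le \liminf_{L}p_L^{\fr}\le\limsup_{L}p_L^{\fr}\le p_k^{\fr}+\tfrac{\beta dm_1}{k}.
\]

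\emph{Step 3 (existence and rate).} From Step 2, $\limsup_L p_L^{\fr}-\liminf_L p_L^{\fr}\le 2\beta dm_1/k$ for every $k\ge1$. Letting $k\to\infty$ shows that the limit $p_\infty(\beta)$ exists. It is finite because $|p_L^{\fr}|\le\log2+\beta dm_1$ for all $L$. Reading Step 2 again with the limit in place of the limsup and liminf gives $|p_k^{\fr}-p_\infty|\le\beta dm_1/k$, which is \eqref{eq:free-rate}.

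\emph{Main point to check.} The only delicate point is the exact constant $d/k$ in \eqref{eq:free-rate}. It works out because the $2d^2k/L$ term in \cref{lem:count}, the factor $(qk/L)^d$, and the remainder term $|R_{L,k}|L^{-d}\log 2$ all vanish as $L\to\infty$ with $k$ fixed. No subadditivity is needed.
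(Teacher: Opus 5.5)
Your proposal is correct and is essentially the paper's proof. Like the paper, you compare with the decoupled tiling via \cref{lem:bond} and \cref{lem:count}, take expectations using translation invariance, let $L\to\infty$ at fixed $k$, and squeeze $\limsup$ and $\liminf$ to within $2\beta d m_1/k$. One wording slip in your final paragraph: the factor $(qk/L)^d$ tends to $1$ rather than vanishing, and tending to $1$ is what Step 2 actually uses.
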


\begin{proof}
Apply \cref{lem:bond} to the full free Hamiltonian and the decoupled Hamiltonian in \eqref{eq:decoupledZ}.  After taking expectations and dividing by $L^d$, translation invariance gives
\begin{equation}\label{eq:mean-tiling}
 \left|p_L^{\fr}(\beta)
 -\left(\frac{qk}{L}\right)^d p_k^{\fr}(\beta)
 -\frac{|R_{L,k}|}{L^d}\log2\right|
 \leq \beta m_1\frac{|D_{L,k}|}{L^d}
 \leq \beta m_1\left(\frac{d}{k}+\frac{2d^2k}{L}\right).
\end{equation}
For fixed $k$, let $L\to\infty$.  Since $qk/L\to1$ and $|R_{L,k}|/L^d\to0$,
\[
 p_k^{\fr}(\beta)-\frac{\beta dm_1}{k}
 \leq \liminf_{L\to\infty}p_L^{\fr}(\beta)
 \leq \limsup_{L\to\infty}p_L^{\fr}(\beta)
 \leq p_k^{\fr}(\beta)+\frac{\beta dm_1}{k}.
\]
The sequence is bounded by \eqref{eq:basicbound}, namely
$ |p_L^{\fr}(\beta)|\leq\log2+\beta dm_1$.  Therefore its limsup minus its liminf is at most $2\beta dm_1/k$ for every $k$, hence is zero.  This proves existence and finiteness of $p_\infty(\beta)$.  Taking the limit in the preceding inequalities yields \eqref{eq:free-rate}.
\end{proof}

We now pass from the quenched limit to an almost-sure limit.  The required probabilistic estimates are direct consequences of the ordinary strong law of large numbers for i.i.d. integrable variables; see, for example, \cite[Chapter~2]{Durrett2019}.

\begin{lemma}[Strong Law of Large Number]\label{lem:slln}
Let $Y_{x,i}=|J_{x,i}|$ and $m_1=\E Y_{0,1}$.  There is an event of probability one on which all of the following hold.
\begin{enumerate}[label=\textup{(\roman*)}]
\item
\[
 \frac1{L^d}\sum_{i=1}^d\sum_{x\in\Lam_L}Y_{x,i}\longrightarrow dm_1.
\]
\item For every fixed $k\geq1$, every direction $i$, and every residue $r\in\{0,\dots,k-1\}$,
\[
 \frac1{L^d}\sum_{\substack{x\in\Lam_L\\x_i\equiv r\pmod k}}Y_{x,i}
 \longrightarrow \frac{m_1}{k}.
\]
\item For every fixed $a\geq1$, with $\Lam_{L-a}=\varnothing$ when $L\leq a$,
\[
 \frac1{L^d}\sum_{i=1}^d
 \sum_{x\in\Lam_L\setminus\Lam_{L-a}}Y_{x,i}\longrightarrow0.
\]
\end{enumerate}
\end{lemma}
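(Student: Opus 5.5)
The plan is to reduce every statement to the ordinary one-parameter strong law of large numbers. The device is an enumeration of $\Z_{\geq0}^d$ in which every cube $\Lam_L$ is an initial segment. Order the sites of $\Z_{\geq 0}^d$ first by the shell index $\max_j x_j$, and within each shell by lexicographic order. Since $\Lam_L=\{x:\max_j x_j\leq L-1\}$, the sites of $\Lam_L$ are exactly the first $L^d$ sites in this enumeration $x^{(1)},x^{(2)},\dots$.

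For a fixed direction $i$, the sequence $(Y_{x^{(n)},i})_{n\geq1}$ is i.i.d. with mean $m_1<\infty$. Ordinary SLLN (as in \cite[Chapter~2]{Durrett2019}) gives $S^{(i)}_n/n\to m_1$ almost surely, where $S^{(i)}_n=\sum_{m\leq n}Y_{x^{(m)},i}$. Passing to the subsequence $n=L^d$ gives $\frac1{L^d}\sum_{x\in\Lam_L}Y_{x,i}\to m_1$. Summing over the $d$ directions proves (i).

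For (ii), fix $k$, $i$ and $r$, and set $A=\{x\in\Z_{\geq0}^d: x_i\equiv r\pmod k\}$. Restricting the global enumeration to $A$ preserves the initial-segment property, so $\Lam_L\cap A$ consists of the first $N_L$ elements of $A$. Here
\[
N_L=|\Lam_L\cap A|=L^{d-1}\,\#\{0\leq t<L: t\equiv r\pmod k\}.
\]
Since the count of such $t$ differs from $L/k$ by at most one, $N_L/L^d\to 1/k$. The values $Y_{x,i}$ along the enumeration of $A$ are again i.i.d. with mean $m_1$, so SLLN gives that the sum over $\Lam_L\cap A$, divided by $N_L$, tends to $m_1$ once $N_L\geq1$, i.e. for all large $L$. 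Multiplying by $N_L/L^d\to1/k$ yields the limit $m_1/k$.

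For (iii), fix $a$ and write the shell sum as a difference of cube sums:
\[
\frac1{L^d}\sum_{x\in\Lam_L\setminus\Lam_{L-a}}Y_{x,i}
=\frac{S^{(i)}_{L^d}}{L^d}-\frac{(L-a)^d}{L^d}\cdot\frac{S^{(i)}_{(L-a)^d}}{(L-a)^d}.
\]
For $L>a$ this tends to $m_1-1\cdot m_1=0$ by the same SLLN event used in (i). Summing over $i$ gives (iii).

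Finally, all the almost-sure events involved are indexed by countably many parameters: the direction $i$, together with $(k,i,r)$ for (ii), and $a$ is already covered by the events of (i). Their intersection therefore has probability one, and on it (i)–(iii) hold simultaneously.

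The only genuinely delicate point is that a general $d$-parameter strong law over arbitrary rectangles would require more than a first moment. The nested-cube enumeration sidesteps this: because the cubes $\Lam_L$ are initial segments of a single ordering, their sums are literally a subsequence of one-dimensional partial sums, and the first moment suffices.
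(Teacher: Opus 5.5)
Your proof is correct and follows the paper's argument. You enumerate $\Z_{\geq0}^d$ so that each cube (and each residue-class slice of it) is an initial segment, apply the one-dimensional strong law along the subsequence of cardinalities, obtain (iii) as a difference of cube sums, and intersect countably many full-measure events, making explicit the shell-lexicographic ordering and the count $N_L/L^d\to 1/k$ that the paper only sketches.
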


\begin{proof}
For each fixed direction, the variables indexed by $\Lam_L$ form a nested family of $L^d$ i.i.d. integrable random variables.  Enumerating the lattice points by successive cubes reduces (i) to the ordinary strong law.  The same argument on the nested residue-class sets proves (ii), because their cardinalities divided by $L^d$ converge to $1/k$.  For (iii), set
$A_L=\sum_{i=1}^d\sum_{x\in\Lam_L}Y_{x,i}$.  By (i),
\[
 \frac{A_L-A_{L-a}}{L^d}
 =\frac{A_L}{L^d}
 -\left(\frac{L-a}{L}\right)^d\frac{A_{L-a}}{(L-a)^d}
 \longrightarrow dm_1-dm_1=0.
\]
There are only countably many choices of $k,i,r,a$, so the corresponding probability-one events may be intersected.
\end{proof}

\begin{proposition}[Almost-sure free-boundary limit]\label{prop:asfree}
For every fixed $\beta\geq0$,
\[
 \cP_L^{\fr}(\beta,J)\longrightarrow p_\infty(\beta)
 \quad\text{almost surely}.
\]
\end{proposition}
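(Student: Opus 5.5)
We follow the same two-scale tiling used in \cref{prop:meanfree}, replacing expectations by almost-sure limits via \cref{lem:slln}. Fix $k\geq1$ and apply \cref{lem:bond} to the full free Hamiltonian on $\Lam_L$ and the decoupled Hamiltonian of \eqref{eq:decoupledZ}. Using the factorization \eqref{eq:decoupledZ} and dividing by $L^d$, this gives, pointwise in $J$,
\[
 \left|\cP_L^{\fr}(\beta,J)-\frac{1}{L^d}\sum_{z\in\{0,\dots,q-1\}^d}\log Z^{\fr}_{B_z}(\beta,J)-\frac{|R_{L,k}|}{L^d}\log2\right|
 \leq \frac{\beta}{L^d}\sum_{e\in D_{L,k}}|J_e|.
\]
The remaining work is to control the block sum and the deleted-bond sum almost surely.

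For the block sum, the random variables $X_z=\log Z^{\fr}_{B_z}(\beta,J)$, $z\in\Z_{\geq0}^d$, are i.i.d. because the blocks $B_z$ use disjoint sets of internal bonds. They are integrable by \eqref{eq:basicbound}, and each has the law of $k^d\cP_k^{\fr}$. Enumerating $z$ along nested cubes $\{0,\dots,q-1\}^d$, exactly as in the proof of \cref{lem:slln}(i), the ordinary strong law gives $q^{-d}\sum_z X_z\to k^d p_k^{\fr}(\beta)$ almost surely as $q\to\infty$. Since $q^dk^d/L^d\to1$, the block term converges almost surely to $p_k^{\fr}(\beta)$. For fixed $k$ the map $L\mapsto q=\lfloor L/k\rfloor$ is nondecreasing and tends to infinity, so no subsequence issue arises.

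The deleted-bond sum is the main point, because \cref{lem:count} only counts bonds and does not control their random weights. I would split $D_{L,k}$ into two parts.
\begin{itemize}
 \item \emph{Interface bonds inside $C_{L,k}$.} These are bonds in direction $i$ whose origin satisfies $x_i\equiv k-1\pmod k$, so their weight is bounded by the residue-class sum in \cref{lem:slln}(ii) with $r=k-1$. Summed over $i$, this contributes at most $dm_1/k$ in the limit.
 \item \emph{Bonds touching $R_{L,k}$.} Each has an endpoint at which some coordinate is at least $qk\ge L-k$, so its origin lies in $\Lam_L\setminus\Lam_{L-k-1}$. \cref{lem:slln}(iii) with $a=k+1$ shows that this part is $o(L^d)$ almost surely.
\end{itemize}
Hence, on a probability-one event $\Omega_k$,
\[
 p_k^{\fr}(\beta)-\frac{\beta dm_1}{k}\leq\liminf_L\cP_L^{\fr}\leq\limsup_L\cP_L^{\fr}\leq p_k^{\fr}(\beta)+\frac{\beta dm_1}{k}.
\]

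On the countable intersection $\bigcap_k\Omega_k$, which still has probability one, we let $k\to\infty$. By \cref{prop:meanfree}, $p_k^{\fr}(\beta)\to p_\infty(\beta)$, and both bounds then squeeze $\liminf_L\cP_L^{\fr}$ and $\limsup_L\cP_L^{\fr}$ to $p_\infty(\beta)$. This proves $\cP_L^{\fr}(\beta,J)\to p_\infty(\beta)$ almost surely.
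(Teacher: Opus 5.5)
Your proposal is correct and follows essentially the same route as the paper. Both arguments use the pointwise bond-perturbation comparison with the decoupled block system, the strong law for the i.i.d.\ block log-partition functions, and the split of $D_{L,k}$ into residue-class interface bonds and a boundary shell controlled by \cref{lem:slln}(ii)--(iii), followed by intersecting over $k$. The differences are cosmetic. The paper uses the slightly thinner shell $\Lam_L\setminus\Lam_{L-k}$ (since in fact $qk\geq L-k+1$) and invokes the rate \eqref{eq:free-rate} directly, whereas your shell of thickness $k+1$ and the limit $p_k^{\fr}(\beta)\to p_\infty(\beta)$ work equally well.
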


\begin{proof}
Fix $k$.  For the block decomposition above, put
\[
 X_z^{(k)}=\log Z_{B_z}^{\fr}(\beta,J),
 \qquad z\in\Z_{\geq0}^d.
\]
The $X_z^{(k)}$ are i.i.d. and integrable by \eqref{eq:basicbound}, because distinct blocks use disjoint families of internal bonds.  Enumerating $\Z_{\geq0}^d$ shell by shell and applying the ordinary strong law gives
\begin{equation}\label{eq:block-slln}
 \frac1{q^d}\sum_{z\in\{0,\dots,q-1\}^d}X_z^{(k)}
 \stackrel{a.s.}{\longrightarrow} \E\log Z_k^{\fr}(\beta,J)=k^d p_k^{\fr}(\beta)
 \quad\text{almost surely}.
\end{equation}
Since $q^d/L^d\to k^{-d}$, \eqref{eq:block-slln} also implies
\[
 \frac1{L^d}\sum_{z\in\{0,\dots,q-1\}^d}X_z^{(k)}
 \stackrel{a.s.}{\longrightarrow} p_k^{\fr}(\beta).
\]

The deleted bonds inside the core are contained, for each direction $i$, in the residue class $x_i\equiv k-1\pmod k$.  Every deleted bond touching $R_{L,k}$ has its origin in the shell $\Lam_L\setminus\Lam_{L-k}$.  Hence \cref{lem:slln} implies
\begin{equation}\label{eq:deleted-slln}
 \limsup_{L\to\infty}\frac1{L^d}\sum_{e\in D_{L,k}}|J_e|
 \leq \frac{dm_1}{k}
 \quad\text{almost surely}.
\end{equation}
Applying \cref{lem:bond} before taking expectations, and using \eqref{eq:decoupledZ}, yields
\[
 \left|\cP_L^{\fr}(\beta,J)
 -\frac1{L^d}\sum_{z\in\{0,\dots,q-1\}^d}X_z^{(k)}
 -\frac{|R_{L,k}|}{L^d}\log2\right|
 \leq \frac{\beta}{L^d}\sum_{e\in D_{L,k}}|J_e|.
\]
Together with \eqref{eq:block-slln}, \eqref{eq:deleted-slln}, and $|R_{L,k}|/L^d\to0$, this gives
\begin{equation}\label{eq:ask-bound}
 \limsup_{L\to\infty}
 |\cP_L^{\fr}(\beta,J)-p_k^{\fr}(\beta)|
 \leq \frac{\beta dm_1}{k}
 \quad\text{almost surely}.
\end{equation}
Intersect the probability-one events over $k\in\mathbb N$.  By \eqref{eq:free-rate},
\[
 \limsup_{L\to\infty}
 |\cP_L^{\fr}(\beta,J)-p_\infty(\beta)|
 \leq \frac{2\beta dm_1}{k}
\]
after using any fixed $k$.  Letting $k\to\infty$ proves the claim.
\end{proof}

\section{Periodic boundary conditions}

Let
\[
 S_L=\{(x,i):x\in\Lam_L,\ x_i=L-1,\ 1\leq i\leq d\}
\]
be the set of wrap-around bonds.  It has cardinality $|S_L|=dL^{d-1}$, and $H_L^{\per}$ is obtained from $H_L^{\fr}$ by adding precisely these bonds.  Therefore \cref{lem:bond} gives the pointwise estimate
\begin{equation}\label{eq:seam}
 |\cP_L^{\per}(\beta,J)-\cP_L^{\fr}(\beta,J)|
 \leq \frac{\beta}{L^d}\sum_{(x,i)\in S_L}|J_{x,i}|.
\end{equation}
Taking expectations,
\begin{equation}\label{eq:mean-seam}
 |p_L^{\per}(\beta)-p_L^{\fr}(\beta)|
 \leq \frac{\beta dm_1}{L}.
\end{equation}
Combining \eqref{eq:mean-seam} with \eqref{eq:free-rate} proves the two deterministic limits and the bounds \eqref{eq:quantpressure}.

For the almost-sure assertion, every origin occurring in $S_L$ lies in the one-layer shell $\Lam_L\setminus\Lam_{L-1}$.  Consequently,
\[
 0\leq \frac1{L^d}\sum_{(x,i)\in S_L}|J_{x,i}|
 \leq \frac1{L^d}\sum_{i=1}^d
       \sum_{x\in\Lam_L\setminus\Lam_{L-1}}|J_{x,i}|
 \longrightarrow0
\]
almost surely by \cref{lem:slln}.  Equation \eqref{eq:seam} and \cref{prop:asfree} now prove \eqref{eq:aspressure}.  Multiplication by $-1/\beta$ gives \eqref{eq:freeconclusions} and the free-energy bounds in \cref{thm:main}.

\begin{corollary}[Simultaneous convergence in temperature]\label{cor:uniformbeta}
There is an event of probability one such that, for every compact interval $I\subset[0,\infty)$,
\[
 \sup_{\beta\in I}|\cP_L^{\fr}(\beta,J)-p_\infty(\beta)|\longrightarrow0,
 \qquad
 \sup_{\beta\in I}|\cP_L^{\per}(\beta,J)-p_\infty(\beta)|\longrightarrow0.
\]
In particular, the almost-sure convergence holds simultaneously for every $\beta\geq0$.
\end{corollary}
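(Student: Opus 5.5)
The plan is to combine convexity and Lipschitz control in $\beta$ with convergence on a countable dense set of temperatures. First, intersect the probability-one events from \cref{prop:asfree} and the periodic version \eqref{eq:aspressure} over all rational $\beta\geq0$, together with the event of \cref{lem:slln}. This gives a single event $\Omega_0$ of probability one on which $\cP_L^b(\beta,J)\to p_\infty(\beta)$ for every rational $\beta$ and both $b\in\{\fr,\per\}$, and on which
\[
 \frac1{L^d}\sum_{i=1}^d\sum_{x\in\Lam_L}|J_{x,i}|\to dm_1 .
\]

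The key deterministic input is an equi-Lipschitz bound in $\beta$. For fixed $J$, apply \cref{lem:bond} with $K=\beta J$ and $K'=\beta' J$, or differentiate directly. This gives
\[
 |\cP_L^b(\beta,J)-\cP_L^b(\beta',J)|\leq |\beta-\beta'|\,M_L(J),
 \qquad M_L(J):=\frac1{L^d}\sum_{i=1}^d\sum_{x\in\Lam_L}|J_{x,i}|,
\]
where the sum covers all bonds of either Hamiltonian. Both the free and the periodic bond sets are indexed by origins in $\Lam_L$, so the bound holds for both. On $\Omega_0$ we have $M_L(J)\to dm_1$, hence $\sup_{L}M_L(J)=:M(J)<\infty$.

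Passing to the limit along rational $\beta,\beta'$ shows that $p_\infty$ is $dm_1$-Lipschitz on the rationals. In expectation the same bound holds for all $\beta$, namely $|p_L^{\fr}(\beta)-p_L^{\fr}(\beta')|\leq dm_1|\beta-\beta'|$. Since $p_\infty$ is the pointwise limit by \cref{prop:meanfree}, it is $dm_1$-Lipschitz on all of $[0,\infty)$.

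Now fix a compact interval $I$ and $\varepsilon>0$, and choose a finite rational grid $G\subset I$ with mesh $\delta$. For $\beta\in I$ take the nearest grid point $g$ and use
\[
 |\cP_L^b(\beta,J)-p_\infty(\beta)|\leq \delta M(J)+\max_{g\in G}|\cP_L^b(g,J)-p_\infty(g)|+dm_1\delta .
\]
The middle term tends to zero because $G$ is finite, so the $\limsup$ of the supremum over $I$ is at most $\delta(M(J)+dm_1)$. Letting $\delta\to0$ gives uniform convergence on $I$. Compact intervals with rational endpoints exhaust $[0,\infty)$, and $\Omega_0$ does not depend on $I$. Hence the convergence holds on the single event $\Omega_0$ simultaneously for every $\beta\geq0$.

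The only delicate point is making the Lipschitz constant uniform in $L$ on one fixed event, rather than only in expectation. Using the same sample for all $L$ and \cref{lem:slln}(i) handles this. Convexity of $\beta\mapsto\cP_L^b$ would be an alternative route, but the Lipschitz bound is more direct.
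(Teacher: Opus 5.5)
Your proposal is correct and is essentially the paper's own argument. \cref{lem:bond} with couplings $\beta J$ and $\beta' J$ gives a Lipschitz bound in $\beta$ whose random constant is controlled by \cref{lem:slln}(i); $p_\infty$ inherits the $dm_1$-Lipschitz bound from the $p_L^b$; and almost-sure convergence at rational $\beta$ is upgraded by a finite rational net on each compact interval.
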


\begin{proof}
For either boundary condition and all $\beta,\beta'\geq0$, \cref{lem:bond} gives
\[
 |\cP_L^b(\beta,J)-\cP_L^b(\beta',J)|
 \leq |\beta-\beta'|\,\frac1{L^d}\sum_{i=1}^d\sum_{x\in\Lam_L}|J_{x,i}|.
\]
The random Lipschitz constants converge almost surely to $dm_1$ by \cref{lem:slln}; the deterministic functions $p_L^b$, and hence their pointwise limit $p_\infty$, are $dm_1$-Lipschitz.  Apply \cref{thm:main} on the countable set of nonnegative rational $\beta$'s and use a finite rational net on each compact interval.
\end{proof}

\section{Remarks on hypotheses and other volume couplings}

\begin{remark}[No symmetry is used]
Neither the boundary comparison nor the tiling argument uses $\E J=0$, symmetry of $\nu$, a variance, or an exponential moment.  The first absolute moment is used exactly to make the surface and deleted-bond sums negligible per unit volume and to ensure integrability of the block pressures.  Centered or symmetric hypotheses enter other spin-glass arguments, such as correlation inequalities, but are unnecessary here.
\end{remark}

\begin{proposition}[Volume-by-volume disorder arrays]\label{prop:fresh}
Fix $\beta\geq0$.  For each $L$, let the $dL^d$ periodic couplings be i.i.d. with law $\nu$, but allow an arbitrary joint coupling of the arrays belonging to different volumes.  Under \eqref{eq:firstmoment},
\[
 \cP_L^{\per}(\beta,J)\stackrel{P}{\longrightarrow} p_\infty(\beta)
 .
\]
If, in addition, for every $t>0$,
\begin{equation}\label{eq:summable-concentration}
 \sum_{L=3}^\infty
 \Pp\bigl(|\cP_L^{\per}-p_L^{\per}|\geq t\bigr)<\infty,
\end{equation}
then the convergence is almost sure for every such joint construction.  Condition \eqref{eq:summable-concentration} holds, in particular, when either
\begin{enumerate}[label=\textup{(\alph*)}]
\item $|J|\leq K$ almost surely for some $K<\infty$; or
\item $J$ is Gaussian with variance $s^2<\infty$.
\end{enumerate}
\end{proposition}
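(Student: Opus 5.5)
The plan has three parts: convergence in probability from one-volume marginals, almost-sure convergence from the summable condition via Borel--Cantelli, and verification of \eqref{eq:summable-concentration} in cases (a) and (b) by concentration inequalities.

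\textbf{Convergence in probability.} For each fixed $L$, the law of $\cP_L^{\per}(\beta,J)$ depends only on the law of the $dL^d$ couplings of that volume. That law is the same as on the canonical disorder space. Convergence in probability is a property of one-volume marginals, so it suffices to prove it in the canonical construction. There, \cref{thm:main} gives almost-sure convergence to $p_\infty(\beta)$, which implies convergence in probability. Hence $\Pp(|\cP_L^{\per}-p_\infty|\geq t)\to0$ under any joint coupling.

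\textbf{Almost-sure convergence.} Assuming \eqref{eq:summable-concentration}, Borel--Cantelli applied to each rational $t>0$ gives $\cP_L^{\per}-p_L^{\per}\to0$ almost surely, for whatever joint construction is used. Combining this with the deterministic convergence $p_L^{\per}\to p_\infty$ from \eqref{eq:quantpressure} yields the almost-sure statement.

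\textbf{Case (a), bounded couplings.} View $\log Z_L^{\per}$ as a function of the $N=dL^d$ independent couplings. By \cref{lem:bond}, changing one coupling changes $\log Z_L^{\per}$ by at most $2\beta K$. McDiarmid's bounded-differences inequality then gives
\[
\Pp\bigl(|\cP_L^{\per}-p_L^{\per}|\geq t\bigr)\leq 2\exp\!\left(-\frac{2t^2L^{2d}}{dL^d(2\beta K)^2}\right)=2\exp\!\left(-\frac{t^2L^d}{2d\beta^2K^2}\right),
\]
which is summable in $L$.

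\textbf{Case (b), Gaussian couplings.} Write $J_e=\mu+s g_e$ with $g_e$ standard Gaussian, allowing a mean $\mu$ since no centering is assumed. The map $g\mapsto\log Z_L^{\per}$ is Lipschitz in the Euclidean norm with constant at most $\beta s\sqrt{N}$. This follows from \cref{lem:bond} and Cauchy--Schwarz: $\sum_e|\Delta J_e|\leq s\sqrt N\,\|\Delta g\|_2$. The Gaussian concentration inequality (Tsirelson--Ibragimov--Sudakov) then gives
\[
\Pp\bigl(|\cP_L^{\per}-p_L^{\per}|\geq t\bigr)\leq 2\exp\!\left(-\frac{t^2L^{2d}}{2\beta^2s^2dL^d}\right),
\]
which is again summable in $L$.

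\textbf{Main obstacle.} The only delicate point is the first step: justifying that convergence in probability transfers from the canonical space to an arbitrary coupling. This needs the observation that $\Pp(|\cP_L^{\per}-p_\infty|\geq t)$ is determined by the single-volume law. It is routine but must be stated explicitly, since the statement is otherwise exactly what \cref{rem:jointlaw} warns about. The concentration steps are standard once the Lipschitz constants are read off from \cref{lem:bond}. The case $\beta=0$ is trivial, because then $\cP_L^{\per}=\log 2$ deterministically.
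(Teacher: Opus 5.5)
Your proposal is correct and follows the paper's proof step for step. Convergence in probability transfers from the canonical space through the one-volume law, and Borel--Cantelli plus $p_L^{\per}\to p_\infty$ gives the almost-sure claim. McDiarmid's inequality and Gaussian concentration then apply with the same constants read off from \cref{lem:bond}: difference bound $2\beta K$, and Lipschitz constant $s\beta\sqrt{d}\,L^{-d/2}$ for $\cP_L^{\per}$. Your Cauchy--Schwarz step and your explicit allowance for a nonzero Gaussian mean just spell out what the paper leaves implicit.
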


\begin{proof}
The one-volume law agrees with that in \cref{thm:main}; its almost-sure convergence on the canonical space therefore implies convergence in probability for every realization of the same marginals.  If \eqref{eq:summable-concentration} holds, the first Borel--Cantelli lemma, followed by $p_L^{\per}\to p_\infty$, proves almost-sure convergence.

The case $\beta=0$ is deterministic.  Suppose $\beta>0$.  For bounded couplings, changing one coordinate changes $\cP_L^{\per}$ by at most $2\beta K/L^d$.  McDiarmid's bounded-differences inequality \cite{McDiarmid1989} consequently yields
\[
 \Pp\bigl(|\cP_L^{\per}-p_L^{\per}|\geq t\bigr)
 \leq 2\exp\!\left(-\frac{t^2L^d}{2d\beta^2K^2}\right).
\]
For Gaussian couplings, $\cP_L^{\per}$, viewed as a function of the standard Gaussian coordinates, is $s\beta\sqrt d\,L^{-d/2}$-Lipschitz.  Gaussian concentration \cite[Chapter~5]{BLM2013} gives
\[
 \Pp\bigl(|\cP_L^{\per}-p_L^{\per}|\geq t\bigr)
 \leq 2\exp\!\left(-\frac{t^2L^d}{2d\beta^2s^2}\right).
\]
Both bounds are summable in $L$.
\end{proof}

\begin{remark}[Finite-range extensions]
The proof extends, with only the bond count changed, to independent translation-covariant finite-range interaction terms, including finitely many spin types or bounded local observables.  One replaces $dL^{d-1}$ by the number of interaction terms crossing the boundary, still $O(L^{d-1})$, and assumes an integrable absolute interaction strength per site.  The same argument also shows that any two boundary conditions differing by a surface-order family of interactions have the same quenched and almost-sure bulk free energy.
\end{remark}

\Addresses

\end{document}